\newcolumntype{+}{>{\global \let \currentrowstyle \relax}}
\newcolumntype{^}{>{\currentrowstyle }}
\newtheorem{theorem}{Theorem}[section]
\newenvironment{proof}{%
{\noindent \bf Proof. }%
}{%
\hfill$\Box$\\%
}
\begin{document}

\begin{frontmatter}



\title{Neural Network-Enhanced Disease Spread Dynamics Over Time and Space}


\author[iitmath,mathbio]{Randy L. Caga-anan\corref{cor1}}

\address[iitmath]{Department of Mathematics and Statistics, MSU-Iligan Institute of Technology, Iligan City, Philippines}
\address[mathbio]{Mathematical Biology Research Group, Complex Systems Research Center, PRISM, MSU-Iligan Institute of Technology, Iligan City, Philippines}
\cortext[cor1]{Corresponding author: randy.caga-anan@g.msuiit.edu.ph}

\begin{abstract}
This study presents a neural network-enhanced approach to modeling disease spread dynamics over time and space. Neural networks are used to estimate time-varying parameters, with two calibration methods explored: Approximate Bayesian Computation (ABC) with Trust Region Reflective (TRF) optimization, and backpropagation with the Adam optimizer. Simulations show that the second method is faster for larger networks, while the first offers a greater diversity of acceptable solutions. The model is extended spatially by introducing a pathogen compartment, which diffuses through environmental transmission and interpersonal contact. We examine scenarios of exact reporting, overreporting, and underreporting, highlighting their effects on public behavior and infection peaks. Our results demonstrate that neural network-enhanced models more accurately capture dynamic changes in disease spread and illustrate the potential of scientific machine learning (SciML) to improve the predictive power of epidemiological models beyond traditional approaches.

\end{abstract}

\begin{keyword}neural networks \sep epidemiological modeling \sep scientific machine learning (SciML)

\MSC[2020] 92D30\sep 68T07  \sep 35Q92 \sep 37N25
\end{keyword}

\end{frontmatter}


\section{Introduction}

Scientific Machine Learning (SciML) emerges as a powerful framework that combines the strengths of traditional scientific computing and machine learning while addressing the limitations of each \cite{noordijk2024}. Traditional scientific computing techniques, such as ordinary differential equations (ODEs) and partial differential equations (PDEs), offer robust and interpretable models for capturing known dynamics, but struggle with uncertainties in parameters that are not directly measurable. On the other hand, machine learning, particularly artificial neural networks (ANNs), provides flexibility in approximating unknown relationships, yet it can lack interpretability and sometimes violate model assumptions. SciML bridges these gaps, providing a synergistic way to model complex systems by combining the best of both approaches.  

In the Philippines, the recent COVID-19 pandemic highlighted the importance of traditional modeling techniques and scientific computing in managing public health. Researchers developed and utilized ODE and PDE - based models to examine disease spread and intervention strategies \cite{arcede2020,caldwell2021, arcede2022}, such as vaccination programs \cite{cagaanan2021, campos2023} and quarantine policies \cite{macalisang2020, bock2020}. These models helped guide public health decisions by forecasting outbreaks and assessing the effectiveness of interventions, underscoring the relevance of traditional mathematical modeling in addressing real-world challenges.  

One major challenge in epidemic modeling is that the transmission rate is time-dependent, influenced by factors such as seasonality, policy shifts, and behavioral changes \cite{verelst2016}. Accurately capturing this variability is crucial for effective modeling. Additionally, the public response often shifts based on external information, such as news and social media, further complicating disease dynamics \cite{shahsavari2020}.  

In our modeling framework, we rely on ODEs and PDEs to represent the known structure of the disease spread. These methods capture the overall system behavior, including transitions between compartments like susceptible, infected, and recovered individuals. However, unknown time-varying parameters, such as the transmission rate and testing rate, present a challenge. To address these unknowns, we leverage neural networks to approximate the time-dependent parameters within the model.  

While physics-informed neural networks (PINNs) are widely used in recent research by embedding governing ODEs or PDEs within the loss function \cite{shaier2021, ning2023, millevoi2024, berkhahn2022}, this approach may present certain challenges. For instance, it does not guarantee that all compartment values will remain nonnegative over time, which could compromise the realism of the epidemiological model. To address these issues, we adopt a different strategy: employing traditional ODE-PDE frameworks as the model's foundation, while using neural networks to approximate unknown, time-dependent parameters. This hybrid approach ensures the model preserves the realistic epidemiological behaviors defined by the mechanistic model. By integrating neural networks with conventional scientific computing techniques in this way, we not only maintain realistic dynamics but also enhance the predictive capability of the epidemic model.

The rest of the paper is organized as follows. In Section \ref{secmodel}, we present the enhanced compartmental model and the architecture of the artificial neural network used. Section \ref{secparams} discusses the parameter values employed, while Section \ref{secalgo} introduces the two algorithms used to calibrate our model. In Section \ref{seccalib}, we provide the model calibration results and simulations. Section \ref{secrds} presents and simulates a spatial model using the calibrated parameters obtained from the previous sections. Finally, in Section \ref{secdc}, we offer our discussion and conclusion.

\section{Mathematical Model}\label{secmodel}
In \cite{cagaanan2023}, a modeling framework for the progression of COVID-19 was introduced and applied across the 17 regions of the Philippines, focusing on the optimal distribution of vaccines over time. In this study, we limit the application to the National Capital Region (NCR), the largest urban center and the most densely populated area in the country.

One advantage of the model’s structure is its ability to estimate untested infections, often referred to as the dark figure of the pandemic.

The population is divided into four compartments: Susceptible $(S_i)$, Infected $(I_i)$, Tested Positive $(T_i)$, and Removed $(R_i)$. Natural birth and death rates, along with reinfection, are excluded from the model, as the simulations in this study cover a relatively short period. Instead of introducing a separate compartment for vaccinated individuals, vaccination is incorporated by modeling the flow from the susceptible compartment to the removed compartment. The complete dynamics of the model are illustrated in Figure~\ref{figmodel}.
\begin{figure}[htbp]
\centering
\includegraphics[scale=0.5]{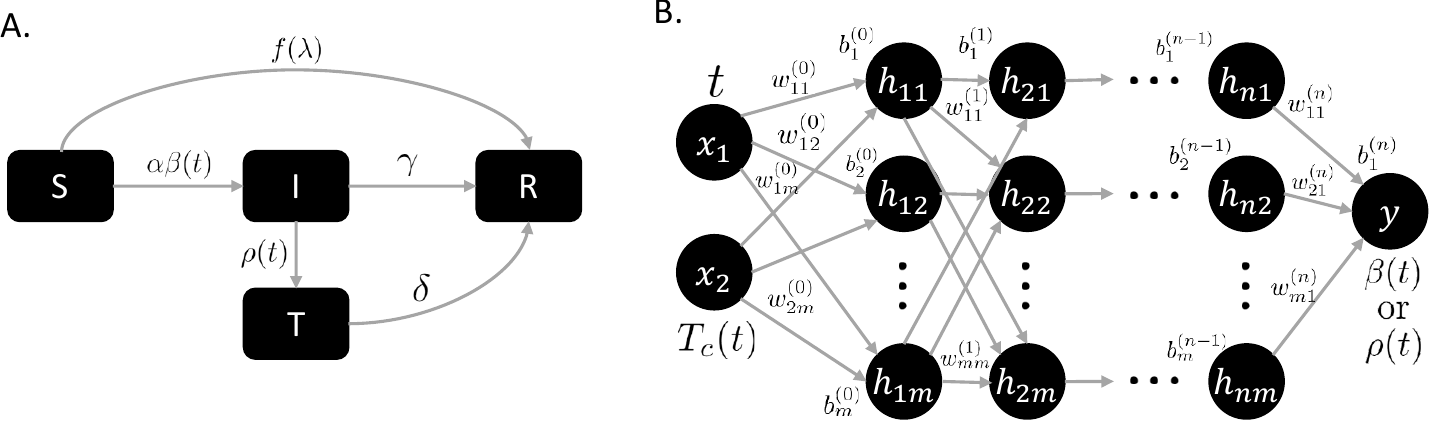} 
\\[-0.2cm]\caption{(A) Flowchart depicting the dynamics of the model, where the population is categorized into Susceptible, Infected, Tested Positive, and Removed compartments. (B) Architecture of the artificial neural network: a multi-layer perceptron with two input neurons, $n$ hidden layers each containing $m$ neurons, and one output neuron. The hidden layers use ReLU as the activation function, while the output layer employs the sigmoid activation function.}
\label{figmodel}
\end{figure}

The model is governed by the following system of differential equations:
\begin{equation} \label{odemodel}
\begin{aligned}
\frac{dS}{dt} &=-\alpha\beta(t) I \frac{S}{N_0} - f(\lambda)\\
\frac{dI}{dt} &= \alpha\beta(t) I \frac{S}{N_0} - (\gamma + \rho(t))I \\
\frac{dT}{dt} &= \rho(t) I - \delta T \\
\frac{dR}{dt} &= \gamma I + \delta T + f(\lambda), 
\end{aligned}   
\end{equation}
where $\alpha, \gamma, \delta >0$; $\lambda$ is a positive integer; $0\leq \beta(t), \rho(t)\leq 1$, for all $t\geq 0$; and $N_0$ is the total population at $t=0$. Moreover, $f(\lambda) = f_1(\lambda) = \lambda$ or $f(\lambda) = f_2(\lambda) = \lambda\frac{S}{N_0}$. The descriptions of the parameters are given in Table~\ref{tab:parameters}. 

When using $f_1$ for $f$, the time domain is limited to $t_f$, ensuring that $\lambda < S(t_f)$, so the daily number of vaccinated individuals does not exceed the number of susceptibles. This model is the same as the one used in \cite{cagaanan2023}. Note that at the onset of the infection, when almost everyone is susceptible, $f_1\approx f_2$.

We will primarily use $f_1$ for the calibration of model (\ref{odemodel}) (Section \ref{seccalib}), to enable comparison with the results in \cite{cagaanan2023}. In Section \ref{secrds}, we will employ $f_2$ for our spatial model. The vaccination rate $f_1$ is not suitable for that model because its value is designed for the entire population and is not directly transferable as a vaccination rate for a spatial unit within the model.

Using $f_2$ for $f$, the following theorem tells us the long-term dynamics of the model.
\begin{theorem}\label{globalstab}
The model described in (\ref{odemodel}), with $f = f_2$, has a disease-free equilibrium at $(0, 0, 0, N_0)$, which is globally asymptotically stable.
\end{theorem}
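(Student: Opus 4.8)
The plan is to exploit the nearly triangular, cascading structure of the system, proving convergence compartment-by-compartment and then upgrading attractivity to full stability via uniform bounds. First I would establish the invariant region: summing the four equations with $f=f_2$ gives $\frac{d}{dt}(S+I+T+R)=0$, so $N\equiv N_0$ is conserved and the dynamics are confined to the compact simplex $\{S,I,T,R\ge 0,\ S+I+T+R=N_0\}$, which I would check is positively invariant (each vector-field component points inward on the corresponding face, since $S=0\Rightarrow \dot S=0$, $I=0\Rightarrow \dot I=0$, and $T=0\Rightarrow \dot T=\rho(t)I\ge 0$). In particular $R=N_0-S-I-T$ is slaved to the other three, so it suffices to drive $(S,I,T)\to(0,0,0)$.

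The first convergence step treats $S$. Since $\alpha\beta(t)I\ge 0$, the $S$-equation yields the differential inequality $\frac{dS}{dt}=-(\alpha\beta(t)I+\lambda)\frac{S}{N_0}\le-\frac{\lambda}{N_0}S$, so by comparison $0\le S(t)\le S(0)\,e^{-\lambda t/N_0}\to 0$. Crucially this also gives the finite integral $\int_0^\infty S(s)\,ds\le \frac{N_0}{\lambda}S(0)$, which I will need next. For $I$, I would integrate the logarithmic derivative: from $\frac{d}{dt}\ln I=\alpha\beta(t)\frac{S}{N_0}-(\gamma+\rho(t))\le\frac{\alpha}{N_0}S(t)-\gamma$ and the integral bound above I get $I(t)\le I(0)\exp\!\big(\tfrac{\alpha}{\lambda}S(0)\big)e^{-\gamma t}\to 0$. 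The point of this form is that the transient-growth prefactor collapses to a single constant controlled by the initial data, so $I$ decays exponentially with a uniform bound rather than merely converging.

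Then $T$ follows by variation of parameters: using $0\le\rho\le 1$ and $\delta>0$, $T(t)=T(0)e^{-\delta t}+\int_0^t e^{-\delta(t-s)}\rho(s)I(s)\,ds$, where the first term vanishes and, since $I(s)\to 0$ and $\int_0^t e^{-\delta(t-s)}\,ds\le 1/\delta$, the convolution term also tends to $0$; hence $T(t)\to 0$. Conservation then forces $R(t)\to N_0$, giving global attractivity of $(0,0,0,N_0)$.

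Finally, to promote this to global asymptotic stability I would read off Lyapunov stability from the same estimates: each of the bounds $S(t)\le S(0)$, $I(t)\le I(0)\exp(\alpha S(0)/\lambda)$, and the corresponding bound on $T(t)$ controls the entire trajectory by the size of the initial perturbation, with constants that tend to $1$ as the initial data approaches the equilibrium, so for every $\epsilon$ a suitable $\eta$ can be chosen; combined with the global attractivity already established, this is exactly global asymptotic stability. The main obstacle I anticipate is that the system is non-autonomous (the time-varying $\beta(t),\rho(t)$ rule out LaSalle's invariance principle and linearization at a fixed point), so the whole argument must proceed through explicit comparison estimates; the delicate point within that is securing the \emph{uniform} prefactor for $I$, which is precisely why establishing $\int_0^\infty S<\infty$ in the $S$-step is the linchpin of the proof.
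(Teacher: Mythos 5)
Your proof is correct, but it takes a genuinely different route from the paper's. The paper argues globally in one stroke: since the compartments sum to a constant, $R$ is nondecreasing and bounded above by $N_0$, so $\int_0^{\infty}\bigl(\gamma I(s)+\delta T(s)+\lambda S(s)/N_0\bigr)\,ds=R^*-R(0)$ is finite, and the nonnegativity of each term then forces $S,I,T\to 0$. You instead run a cascading comparison argument: $\dot S\le -(\lambda/N_0)S$ gives exponential decay of $S$ and, crucially, $\int_0^\infty S<\infty$; feeding that into $\frac{d}{dt}\ln I\le \frac{\alpha}{N_0}S-\gamma$ gives $I(t)\le I(0)e^{\alpha S(0)/\lambda}e^{-\gamma t}$; and variation of parameters then kills $T$. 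What your approach buys is twofold. First, it is airtight where the paper's is slightly informal: a nonnegative integrable function need not tend to zero pointwise, so the paper's final step implicitly relies on a Barbalat-type argument (the integrand is uniformly continuous because all derivatives are bounded on the invariant compact simplex) that is left unstated; your explicit exponential envelopes sidestep this entirely and give quantitative decay rates. Second, you explicitly verify Lyapunov stability — the bounds $S(t)\le S(0)$, $I(t)\le I(0)e^{\alpha S(0)/\lambda}$, $\sup_t T(t)\le T(0)+\delta^{-1}\sup_t I(t)$ control the whole trajectory by the initial perturbation — whereas the paper proves only global attractivity and leaves the stability half of ``globally asymptotically stable'' unaddressed. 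The price is length: the paper's integral argument is three lines and uses only conservation and monotonicity of $R$, while yours requires the positive-invariance check and three separate estimates. Both are valid; yours is the more complete and more quantitative proof.
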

\begin{proof}
Observe that $\frac{dS}{dt}+\frac{dI}{dt}+\frac{dT}{dt}+\frac{dR}{dt}=0$. Hence, the sum of the compartments remains unchanged over time. It is then easy to see that the disease-free equilibrium $(S^*,I^*,T^*,R^*)$ is given by $(0,0,0,N_0)$.
Consider that 
\begin{equation}
	R(t)-R(0)=\int_{0}^{t}\left(\gamma I(s) + \delta T(s) + \lambda\frac{S(s)}{N_0}\right)\, ds
\end{equation}
and 
\begin{equation}
	R^*-R(0)=\int_{0}^{+\infty}\left(\gamma I(s) + \delta T(s) + \lambda\frac{S(s)}{N_0}\right)\, ds.
\end{equation}
This is finite. Hence, $\gamma I(t) + \delta T(t) + \lambda\frac{S(t)}{N_0} \to 0$ as  $t\to +\infty$.
Since each term is nonnegative, we thus have $S(t),\, I(t),\, T(t) \to 0$ as $t\to +\infty$.
\end{proof}

In Figure \ref{figmodel} (B), we present our artificial neural network architecture with 2 input neurons, $n$ hidden layers each containing $m$ neurons, and one output neuron. It is a multi-layer perceptron, or a network where succeeding layers are fully connected. For each arrow in the network, we associate one weight parameter, and for each neuron in the hidden and output layers, we associate a bias parameter. Thus, the number of learnable parameters is $2m + (n - 1)m^2 + m$ (for the weights) plus $nm + 1$ (for the biases), which equals 
\begin{equation}\label{totalwb}
(n - 1)m^2 + (3 + n)m + 1.
\end{equation}
The hidden layers use the rectified linear unit (ReLU) ($f(x) = \max(0, x)$) as the activation function, while the output layer employs the sigmoid activation function, $\sigma(x) = \frac{1}{1 + e^{-x}}$. The computation from the input layer to the output layer follows the standard procedure: calculate the weighted sum of inputs from the preceding layer, add the bias, and apply the activation function to introduce non-linearity. Repeat these steps successively from the hidden layers to the output layer. In our model, the two inputs are time $t$ and the cumulative number of tested positive cases at time $t$, denoted by $T_c(t)$. The output is the value of the parameter $\beta$ or $\rho$ at time $t$.

\section{Parameter Values}\label{secparams}
The confirmed cases data for NCR used in parameter calibration spans the period from June 1, 2021, to October 31, 2021, and was made publicly available by the Philippines' Department of Health (DOH) \cite{dohtracker}. This is the same dataset used in \cite{cagaanan2023} to determine the fitted values of the parameters $\beta$, $\alpha$, $\rho$, and $I_0$ using the Approximate Bayesian Computation approach combined with the Levenberg-Marquardt algorithm \cite{csillery2010, levenberg1944, marquardt1963}.

We will employ the same technique as one of the methods to calibrate the weights and biases, as explained in the next section. In this study, we fix the parameters and initial conditions based on \cite{cagaanan2023} but calibrate the weights and biases of the neural networks for $\beta$ and $\rho$. Please refer to Table~\ref{tab:parameters} for the values. In some simulations, we focus only on the neural network for $\beta$, in which case the value of $\rho$ is fixed as reported in \cite{cagaanan2023}.

Notably, \cite{cagaanan2023} reported a relative error of $4.7 \times 10^{-4}$ with constant values for $\beta$ and $\rho$. In Section \ref{seccalib}, we will demonstrate that this error can be reduced by optimizing the neural networks for $\beta$ and $\rho$.

\begin{table}[htbp]
	\centering
\begin{tabular}{c|c|c}
	\hline 
\textbf{Symbol}	& \textbf{Description} & \textbf{Value} \\ 
        \hline 
	\hline 
	$\lambda$ & vaccination rate &    37430 humans/day \\
	\hline 
	$\beta$ & time-dependent transmission rate factor & neural network \\ \hline 
	$\alpha$ & transmission rate constant &  0.44696 /day \\
	\hline 
	$\rho$ &  detection rate &  0.01187 /day \\ 
	\hline 
	$\delta$ &  removal rate from $T$ to $R$ & 0.1428 /day  \\ 
	\hline 
	$\gamma$ &  removal rate from $I$ to $R$  & 0.0833 /day  \\ 
	\hline 
        \hline 
	$N_0$ &  total population  & 13484462 humans  \\ 
        \hline 
	$S_0$ &  initial value for $S$  & 11381077 humans  \\ 
        \hline 
	$I_0$ &  initial value for $I$  & 12888 humans \\ 
	\hline 
	$T_0$ &  initial value for $T$  & 7750 humans \\ 
        \hline 
	$R_0$ &  initial value for $R$  & 522624 humans \\ 
	\hline 
\end{tabular} 
\caption{Parameters and Initial Conditions. The parameter values and initial conditions are taken from \cite{cagaanan2023} for NCR. In \cite{cagaanan2023}, the parameter $\beta$ was fitted to 0.43619. With this set of parameters, the paper reported a relative error of 0.00047.}
\label{tab:parameters}
\end{table}

\section{Algorithms}\label{secalgo}
We use two approaches to optimize the weights and biases of our neural networks. The first approach is similar to the optimization method employed in \cite{cagaanan2023}, which utilizes the Approximate Bayesian Computation (ABC) method. In this study, however, we combine it with the Trust Region Reflective (TRF) minimization algorithm \cite{branch1999, byrd1988}. TRF effectively handles bounded optimization problems and is well-suited for large-scale problems with many parameters or data points. Please refer to Algorithm 1 for the implementation details.

The second approach employs machine learning techniques. Specifically, we trained the weights and biases using backpropagation with the Adaptive Moment Estimation (Adam) optimizer \cite{kingma2014}. Refer to Algorithm 2 for the implementation details. 

For Algorithm 1, the cost function to be minimized is the nonlinear least squares function, defined as 
$$L_1(\theta) = \sum_{t=1}^{n}(T_{d}(t_i) - T_c(t_i))^2,$$
where $n$ is the number of data points, $\theta$ is the set of parameters to be fitted, and $T_{d}(t_i)$ and $T_c(t_i)$ represent the cumulative confirmed cases from the data and the model output at time $t_i$, respectively. For Algorithm 2, the loss function is the mean squared error (MSE), given by $$L_2 = L_1/n.$$ The relative error $r_e$ is calculated by dividing $L_1$ by the sum of the squares of the data values.

To use Algorithm 1, we must set bounds for the weights and biases. In our case, we restrict them to the interval $(-3, 3)$. No bounds are required for Algorithm 2.

In both algorithms, the predicted values $y_p$ correspond to the cumulative tested positive cases $T_c$, which is the output of the enhanced model using $\beta(t, T_c(t))$ (and $\rho(t, T_c(t))$) for each $t$. 

We implement the forward pass in Algorithm 1 and the training routine of Algorithm 2 using PyTorch \cite{paszke2019}.

\begin{algorithm}\label{algoabc}
\caption{ABC with TRF}
\begin{algorithmic}[1] 
\STATE \textbf{Input:} Observed data $y$, tolerance $\epsilon$, number of samples $N$, prior distribution $P(\theta)$, loss function $L_1(y,y_p)$, parameters feasible space $\Delta$
    \FOR{$i = 1$ to $N$}
            \STATE Sample a set of parameters $\theta_0$ from $P(\theta)$.
            \STATE Solve for $$\theta_{opt}=\min_{\theta\in\Delta} L_1(y,y_p(\theta)),$$
            with $\theta_0$ used as the initial guess for the minimization method (TRF) and $y_p(\theta)$ is the model output using the set of parameters $\theta$.
            \IF{$L_1(y,y_p(\theta_{opt})) < \epsilon$}
                \STATE Accept $\theta_{opt}$. Add to the set $\Theta_{accepted}$.
            \ELSE
                \STATE Reject $\theta_{opt}$.
            \ENDIF
    \ENDFOR
 \STATE \textbf{Output:} Set of accepted parameters $\Theta_{accepted}$
\end{algorithmic}
\end{algorithm}

\begin{algorithm}\label{algoml}
\caption{Training with Adam Optimizer and MSE Loss}
\begin{algorithmic}[1] 
\STATE \textbf{Input:} Training data $y$, learning rate $\epsilon$, initial weights $W_0$  and biases $b_0$, number of epochs $N$

    \FOR{$i = 1$ to $N$}
            \STATE \textbf{Forward Pass:} Compute the model output $y_p(W,b)$ using the current  weights $W$ and biases $b$. Initially, use $W_0$ and $b_0$ as the starting values.
            \STATE \textbf{Compute Loss (MSE):} Calculate $L_2(y,y_p)$.
            \STATE \textbf{Compute Gradients:} Calculate $\frac{\partial L_f}{\partial W}$ and $\frac{\partial L_f}{\partial b}$ using backpropagation.
            \STATE \textbf{Apply Adam Optimizer:} Update weights $W$ and biases $b$.
    \ENDFOR
 \STATE Save the final weights and biases as $W_{opt}$ and $b_{opt}$, respectively.
 \STATE \textbf{Output:} Optimized weights $W_{opt}$ and biases $b_{opt}$
\end{algorithmic}
\end{algorithm}

\section{Model Calibration and Simulations}\label{seccalib}
We consider six cases summarized in Table \ref{tab:results}. The first four cases involve model (\ref{odemodel}) with $\beta(t)$ modeled as a neural network while keeping $\rho$ fixed. The last two cases use model (\ref{odemodel}) with both $\beta(t)$ and $\rho(t)$ as neural networks. We alternate between Algorithms 1 and 2 to compare their performance across three scenarios.

First, we evaluate a simple architecture for $\beta(t)$ with 2 hidden layers, each containing 2 neurons. This configuration results in a total of 15 learnable parameters, as computed from (\ref{totalwb}). Next, we examine a more complex case by increasing the number of hidden layers to 3, each containing 4 neurons, which significantly increases the learnable parameters to 57. Finally, we consider a scenario where both $\beta(t)$ and $\rho(t)$ are represented as neural networks, each using the simpler architecture of 2 hidden layers with 2 neurons per layer.

For Algorithm 1, we set the tolerance $\epsilon = 10^{-3}$. With 2000 samples for Algorithm 1 and 2000 epochs for Algorithm 2, we can already find parameters that yield a relative error smaller than that reported in \cite{cagaanan2023} for all cases except the last. In the last case, after 2000 epochs, the relative error remains slightly higher than the reported value in \cite{cagaanan2023}. However, with 4000 epochs, the error is  now slightly better, and after 6000 epochs, it achieves a relative error on the order of $10^{-6}$, which is significantly lower than the one obtained in \cite{cagaanan2023}.

The runtime column in Table \ref{tab:results} highlights that the machine learning approach is significantly faster, particularly as the network complexity increases (3.121 hours compared to 8.615 hours). Although Algorithm 1 accepts multiple parameter sets, only the set with the minimum relative error is reported in the relative error column of Table \ref{tab:results}.

Figure \ref{abcvsml1} shows the outputs for the first two cases in Table \ref{tab:results}, while Figure \ref{abcvsml2} presents the outputs for the last two cases. In each figure, the left column shows the results obtained using the machine learning approach, while the right column shows those from the ABC with TRF method. It is evident that the ABC with TRF method yields a more diverse set of accepted parameters. However, the machine learning approach achieves the best fit.

\begin{table}[htbp]
	\centering
\begin{tabular}{c|c|c|c}
	\hline 
\textbf{Algorithm} & \textbf{Neural Net}	& \textbf{Relative Error} & \textbf{Runtime} \\ 
	\hline 
        \hline
	1 (2000 samples) & $\beta:2h2n:15p$ &  $8.29\times 10^{-5}$ & 1.344 hours \\
	\hline 
	2 (2000 epochs) & $\beta:2h2n:15p$ &  $1.29\times 10^{-4}$ & 2.823 hours \\
	\hline 
        1 (2000 samples) & $\beta:3h4n:57p$ &  $7.98\times 10^{-5}$ & 8.615 hours \\
	\hline 
	2 (2000 epochs) & $\beta:3h4n:57p$ &  $1.28\times 10^{-4}$ & 3.121 hours \\
	\hline 
        1 (2000 samples) & $\beta,\rho:2h2n:30p$ &  $1.96\times 10^{-4}$ & 6.791 hours \\
	\hline 
	2 (6000 epochs) & $\beta,\rho:2h2n:30p$ &  $7.44\times 10^{-6}$
 & 8.462
 hours \\
	\hline 
\end{tabular} 
\caption{Calibration cases considered for model (\ref{odemodel}).}
\label{tab:results}
\end{table}

\begin{figure}[htbp]
\centering
\includegraphics[scale=0.37]{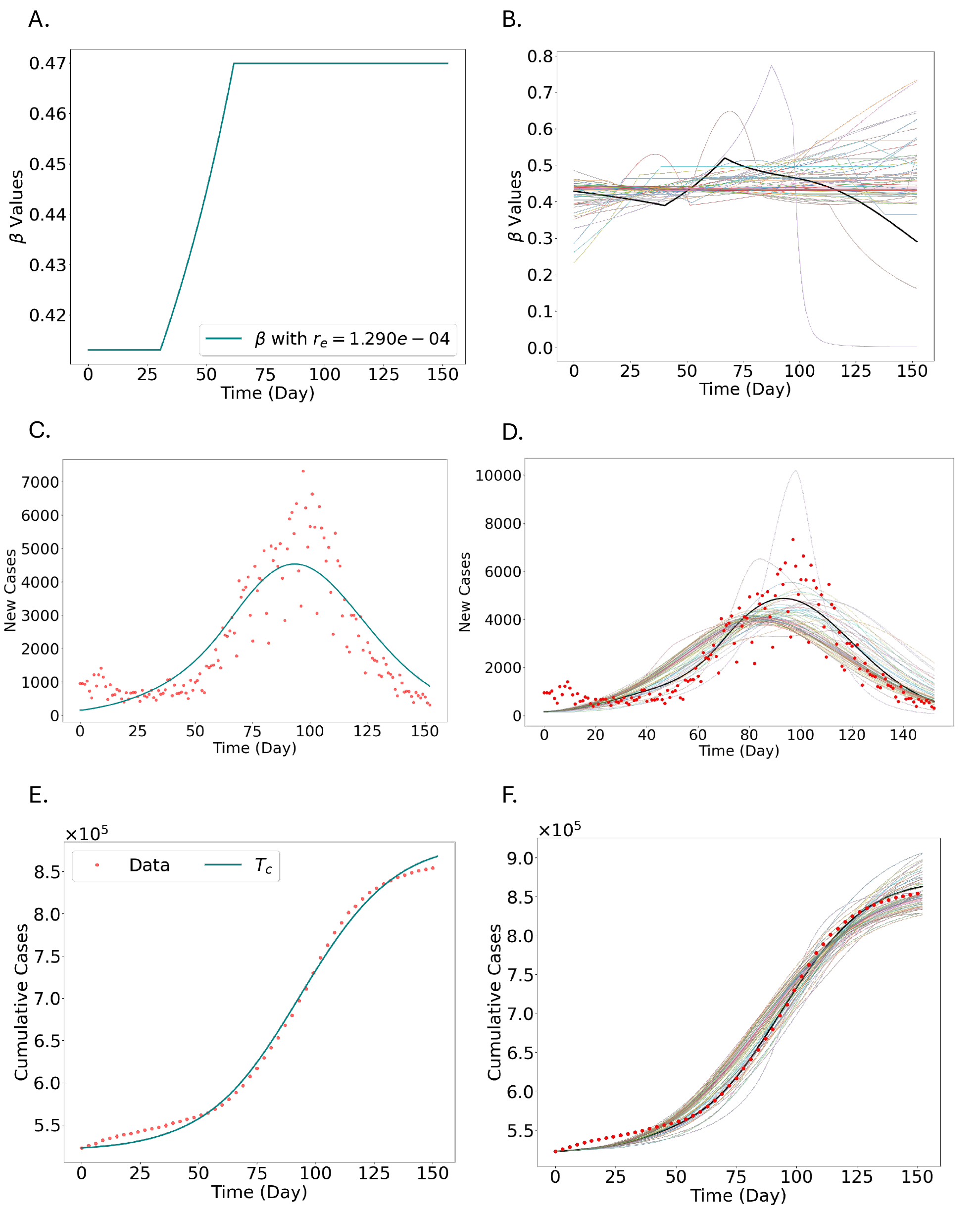} 
\\[-0.2cm]\caption{Model calibration outputs for cases 1 and 2 from Table \ref{tab:results}. The left column (A., C., E.) presents the output from the machine learning method (Algorithm 2), while the right column (B., D., F.) shows the output from the ABC with TRF approach (Algorithm 1). In the right column, the curves represent the accepted parameter sets ($L_1(y, y_p(\theta_{opt})) < \epsilon$), with the highlighted curve (in black) indicating the set with the minimum relative error. The first row (A., B.) displays the $\beta(t)$ values, the second row (C., D.) shows the fit for new confirmed cases, and the third row (E., F.) presents the fit for cumulative confirmed cases.}
\label{abcvsml1}
\end{figure}

\begin{figure}[htbp]
\centering
\includegraphics[scale=0.38]{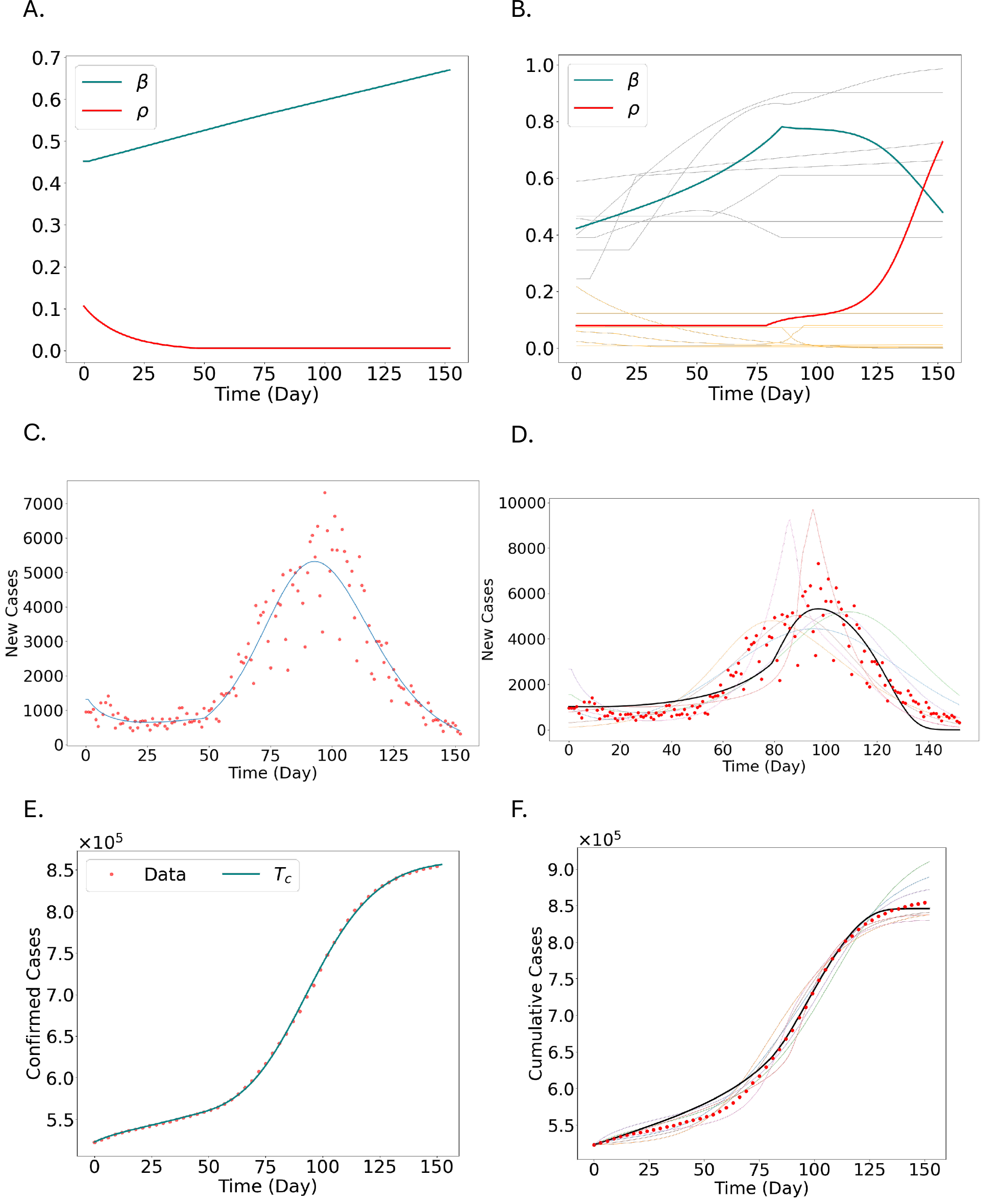} 
\\[-0.2cm]\caption{Model calibration outputs for the last two cases from Table \ref{tab:results}. In these cases, both $\beta(t)$ and $\rho(t)$ are modeled as neural networks. The descriptions are the same as those in Figure \ref{abcvsml1}, except that in the first row (A., B.), we now have both $\beta(t)$ and $\rho(t)$.}
\label{abcvsml2}
\end{figure}


\section{Enhanced Reaction-Diffusion System}\label{secrds}
In this section, we will use the calibrated neural networks for $\beta(t)$ and $\rho(t)$ for model (\ref{odemodel}) but with a space variable. We observe that, in general, people do not diffuse over space but return to their homes every day, so we do not include diffusion for the human compartments $S$, $I$, $T$, and $R$. Instead, we introduce a new variable $P$, representing the pathogen—in our case, the COVID-19 virus—which can diffuse over space by being transmitted from person to person or through the environment.

The dynamics are governed by a system of five differential equations as follows, for $t > 0$ and $x\in \Omega\subset\mathbb{R}^2$,
\begin{equation}\label{pdemodel}
\begin{aligned}
    \frac{\partial S(t, x)}{\partial t} &= - \frac{\alpha \beta(t) S(t, x) P(t, x)}{N(t, x)} - \frac{\Lambda S(t, x)}{N_0} \\
    \frac{\partial I(t, x)}{\partial t} &= \frac{\alpha \beta(t) S(t, x) P(t, x)}{N(t, x)} - (\gamma + \rho(t) ) I(t, x) \\
    \frac{\partial T(t, x)}{\partial t} &= \rho(t)\, I(t, x) - \delta\, T(t, x) \\
    \frac{\partial R(t, x)}{\partial t} &= \gamma\, I(t, x) + \delta\, T(t, x) + \frac{\Lambda S(t, x)}{N_0} \\
    \frac{\partial P(t, x)}{\partial t} &= D \nabla^2 P(t, x) + \xi_I I(t, x) + \xi_T T(t, x) - \zeta P(t, x),
\end{aligned}
\end{equation}
where $N(t, x) = S(t, x) + I(t, x) + T(t, x) + R(t, x)$, $D$ is the diffusion coefficient, and $\nabla^2 P(t, x)$ is the Laplacian (spatial diffusion term) that models how the pathogen spreads across space. Here, we imposed homogeneous Neumann boundary conditions.

We introduce additional parameters: $\xi_I$ and $\xi_T$, which represent the rates of quanta generation by infected ($I$) and tested positive ($T$) individuals, respectively. A quantum of pathogen refers to a unit of pathogen concentration sufficient to cause infection in a susceptible individual. Finally, $\zeta$ denotes the pathogen clearance rate.

Figure \ref{figpdeinitial} presents the initial states ($t = 0$) of $S$, $I$, and $P$ over the space domain, in our case the NCR region. The population density for $S$ is taken from \cite{citypopulation}. Since no publicly available spatial data exists for confirmed cases in NCR, we generated the states for $I$ and $P$ for illustrative purposes.

For the fixed parameter values, we used those from Table \ref{tab:parameters}. For the additional parameters, we estimated the following values: $\xi_I = 0.85$, $\xi_T = 0.25$, $\zeta = 0.75$, and $D = 140.0625$. The value for $D$ is from \cite{arcede2022}, but it was multiplied by 10 because the original value represents the average for the whole Philippines, whereas NCR has the highest level of human activity, allowing the pathogen to spread more rapidly.

We numerically solve model (\ref{pdemodel}) using the finite volume method. Figure \ref{figpdeevolution} shows the time and space evolution of $I$ and $P$. We simulated the model for 152 days, but here we only present the states at $t = 30, 90,$ and $150$.

As noted in the introduction, public response can be influenced by overreporting or underreporting in the news and social media. For example, overhyped cases may lead the public to adhere more strictly to protocols, while underreporting may cause them to become complacent. These behaviors, in turn, affect the subsequent number of cases, which then influence the next public response, creating a cyclical pattern. We use our calibrated $\beta$ and $\rho$ to simulate these scenarios. For exact reporting, we use $\beta(t, T_c(t))$ and $\rho(t, T_c(t))$. In the case of overreporting, we use $\beta(t, 2T_c(t))$ and $\rho(t, 2T_c(t))$, while for underreporting, we use $\beta(t, 0.5T_c(t))$ and $\rho(t, 0.5T_c(t))$. The results are shown in Figure \ref{figpdereunderoverreporting}. One may notice that Figure \ref{figpdereunderoverreporting} (A) shows double peaks, enabled by our time-dependent parameters $\beta$ and $\rho$. This phenomenon, which may be more realistic, cannot occur if the parameters are constant over time.

\begin{figure}[htbp]
\centering
\includegraphics[scale=0.4]{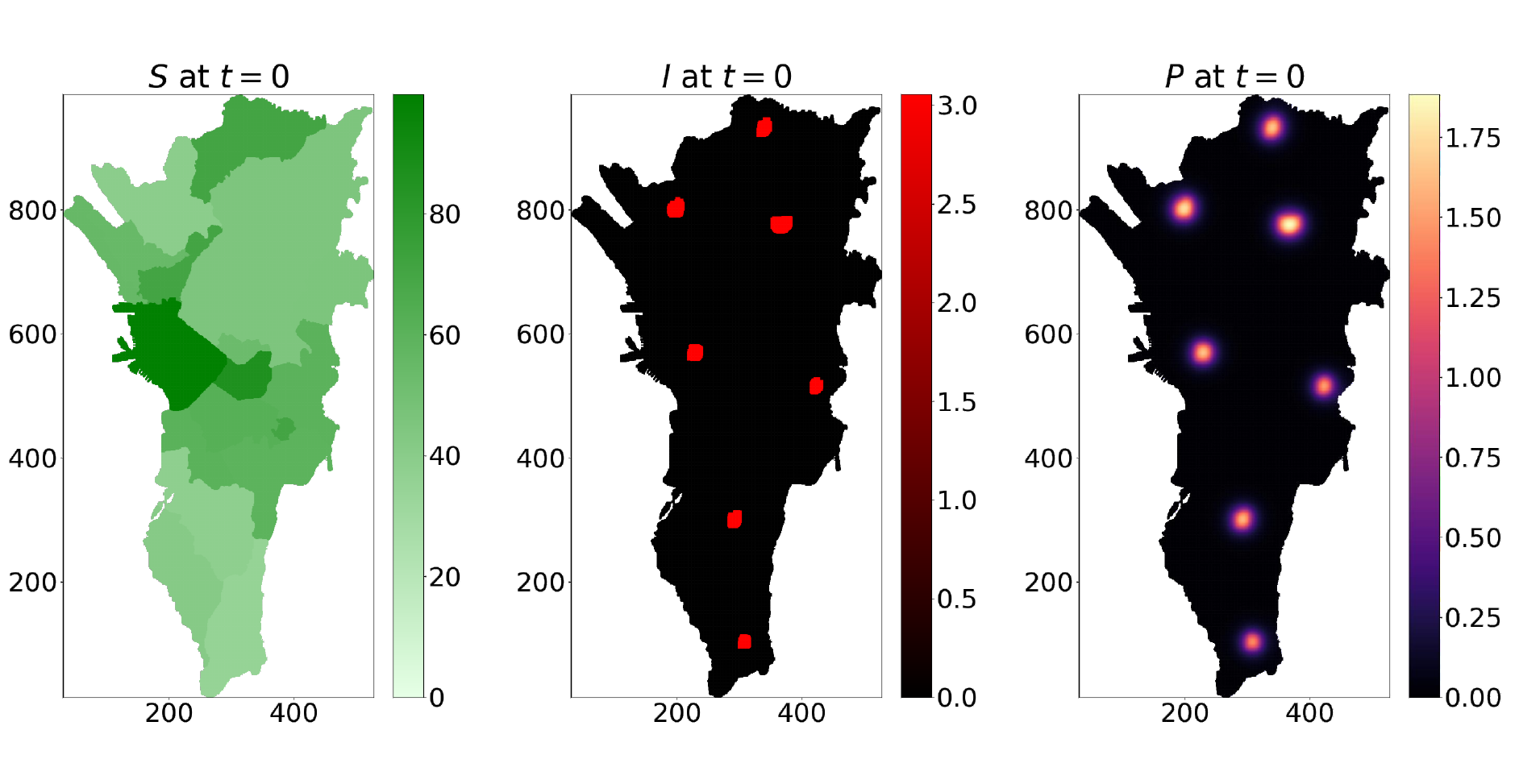} 
\\[-0.2cm]\caption{Initial states ($t=0$) used in the simulations.}
\label{figpdeinitial}
\end{figure}

\begin{figure}[htbp]
\centering
\includegraphics[scale=0.4]{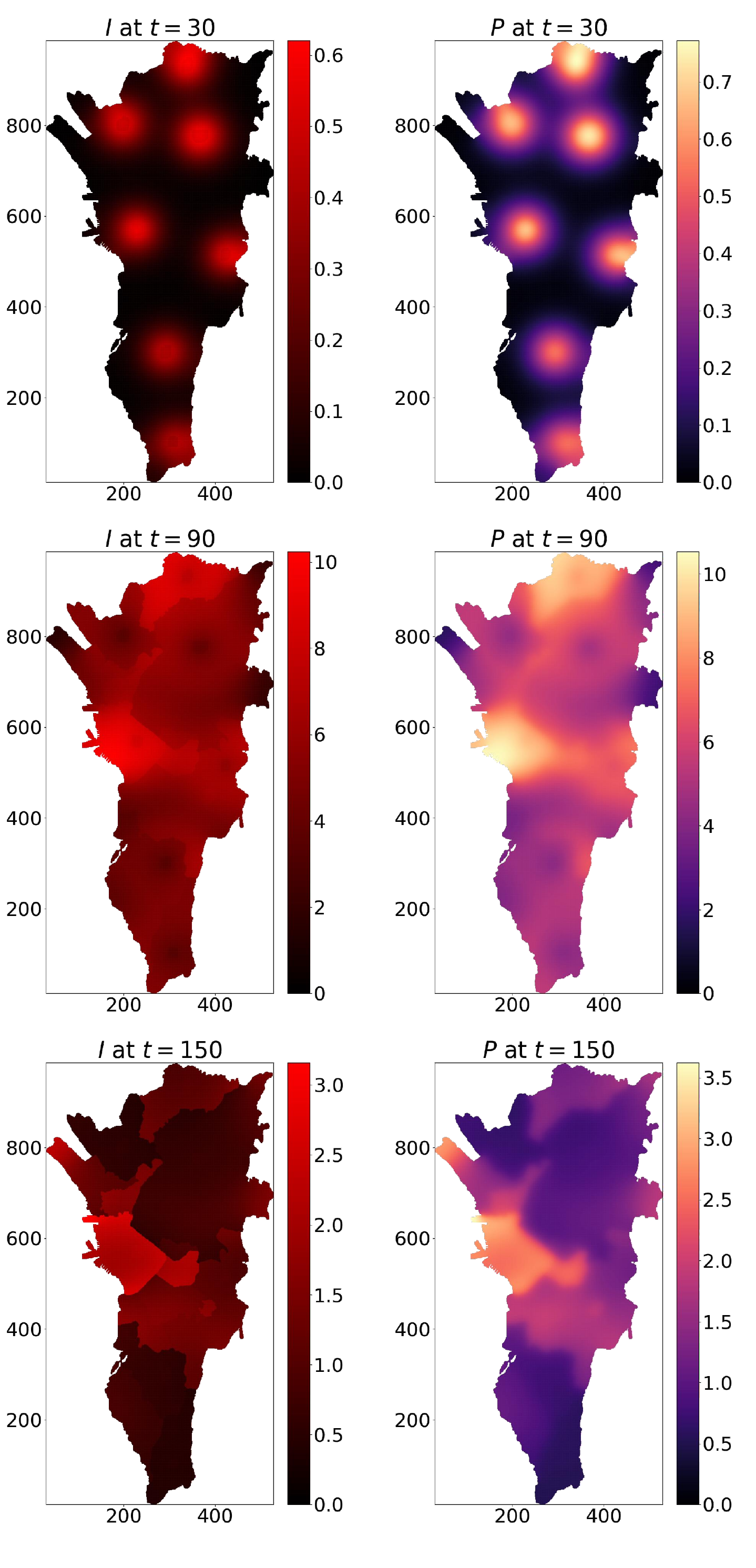} 
\\[-0.2cm]\caption{Time and space evolution of $I(t, x)$ and $P(t, x)$.}
\label{figpdeevolution}
\end{figure}

\begin{figure}[htbp]
\centering
\includegraphics[scale=0.5]{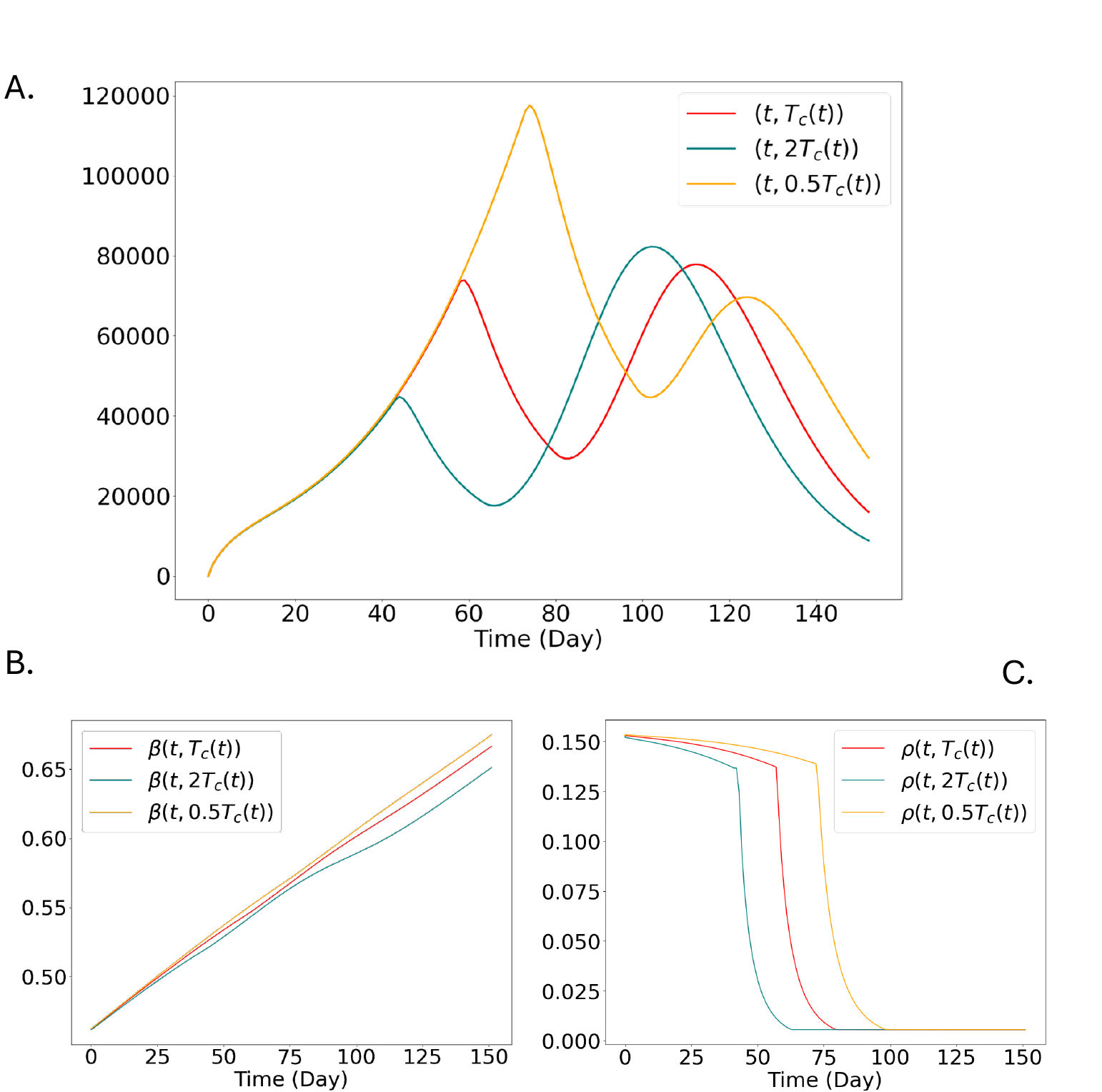} 
\\[-0.2cm]\caption{Simulations of the effect of overreporting and underreporting using the calibrated neural networks for $\beta$ and $\rho$. (A) $T(t) = \int_\Omega T(t, x)\,dx$ for the three cases: true, under, and overreporting. (B) $\beta$ for the three cases. (C) $\rho$ for the three cases.}
\label{figpdereunderoverreporting}
\end{figure}

\section{Discussion and Conclusion}\label{secdc}
In \cite{cagaanan2021}, we defined time-dependent parameters as piecewise constant functions, manually setting the times at which the parameters change their values. In this study, we demonstrated that we can leverage machine learning techniques to estimate both when and how parameters change over time. For example, in Figure \ref{abcvsml1} (A), our trained neural network revealed that the transmission rate $\beta$ began to change after day 25, with a new, increased constant rate emerging around day 60—entirely based on the data.

Considering our two calibration algorithms, if the goal is to obtain diverse, acceptable parameter sets with shallow neural networks, Algorithm 1 would likely be the best choice. However, if speed is required, especially when working with large or deep neural networks, Algorithm 2, which is designed for such cases, would be the better option.

In Section \ref{secrds}, we applied our calibrated neural networks for $\beta$ and $\rho$ to the spatial version of model (\ref{odemodel}). A key feature of this model is the introduction of a pathogen compartment, where diffusion occurs. This adaptation reflects the fact that, generally, populations do not diffuse in the traditional sense within the spatial domain, as people tend to return home daily.

Figure \ref{figpdereunderoverreporting} (A) illustrates the potential effects of overreporting and underreporting in news or social media. If overreporting occurs, people may follow protocols more strictly, resulting in a lower first peak. However, this lower peak could lead to complacency, causing a higher second peak. The reverse can happen with underreporting: an initially higher peak, followed by a smaller second peak due to increased vigilance afterward.

In summary, we demonstrated one possible application of scientific machine learning (SciML) by improving epidemiological models through the use of neural networks to represent time-dependent parameters. Our simulations showed significant improvements over traditional models.






\bibliographystyle{plain}

\end{document}